\documentclass{article}
\usepackage[latin1]{inputenc}

\usepackage{amsmath,amssymb,amsbsy}
\usepackage{amsthm}
\usepackage[english]{babel}
\usepackage{graphicx}
\usepackage{subfigure}
\usepackage{natbib}

\renewcommand{\operatorname}[1]{\mathop{\mathrm{#1}}}
\newcommand{\tr}{\mathrm{\mathop{tr}}}
\newcommand{\T}{^\mathrm{T}}
\newcommand{\Real}{\mathbb{R}}

\newtheorem{theorem}{Theorem}
\theoremstyle{remark}
\newtheorem{alg}{Algorithm}

\setcounter{topnumber}{3}

\usepackage{tikz}
\usetikzlibrary{arrows}
\usetikzlibrary{shapes}
\usetikzlibrary{plotmarks}
\usetikzlibrary{backgrounds}

\tikzstyle{graph}=[line width=1.2pt,join=round]

\bibliographystyle{plainnat}
\title{Global Sensitivity Analysis of Biochemical Reaction Networks via Semidefinite Programming}
\author{Steffen Waldherr\footnotemark[1] \and Rolf Findeisen\footnotemark[2] \and Frank Allgöwer\footnotemark[1]}
\date{\footnotemark[1] Institute for Systems Theory and Automatic Control,\\ Universität Stuttgart, Germany \\
\footnotemark[2] Chair for Systems Theory and Automatic Control, Otto-von-Guericke-Universität Magdeburg, Germany}

\begin{document}
%
%


\maketitle

\begin{abstract}
We study the problem of computing outer bounds 
for the region of steady states of biochemical
reaction networks modelled by ordinary differential equations,
with respect to parameters that are allowed to vary within a predefined region.
Using a relaxed version of the corresponding feasibility problem and its
Lagrangian dual, we show how to compute certificates for regions in state
space not containing any steady states.
Based on these results, we develop an algorithm to compute outer bounds for the region
of all feasible steady states.
We apply our algorithm to the sensitivity analysis of a Goldbeter--Koshland enzymatic
cycle, which is a frequent motif in reaction networks for regulation of metabolism and signal
transduction.
{\it Copyright \copyright~2008 IFAC}.
\end{abstract}

\section{Introduction}
\label{sec:intro}

A basic question in the analysis of biochemical reaction networks is how steady state
concentrations change with parameters.
Metabolic Control Analysis (MCA) is a classical tool to answer this question \citep{KacserBur1995},
where the analysis is based on a linear approximation of the system's equations around the
steady state.
Due to the linear approximation, results from MCA are only valid if parameter variations
are small.
However, in natural biochemical reaction networks, one usually faces large parameter variations:
in genetic engineering, common techniques like gene knock-outs or knock-downs, overexpression
or binding site mutations typically give rise to large parameter variations.

It follows that there is a need to compute changes in steady state values 
which are due to large parameter variations.
One approach to broaden the validity of results from MCA to larger parameter variations
is to include higher order approximations at the nominal point \citep{StreifFin2007}.
Although such an approach may extend the validity of the approximation, it still gives
results which are in general only locally valid.

We will thus rather take a different route and study the problem from the perspective
of computing the set of all steady states
for given ranges in which parameter values may vary.
In contrast to classical, local sensitivity analysis, such an approach allows to directly evaluate
the range that steady state concentrations can take for given parameter ranges.
The drawback is that it is not directly possible to assess the influence of individual
parameters on the steady state.
However, by repeating the computation for different parameter ranges, also this
information may be obtained.

Computing the set of steady states analytically is only possible in very rare cases. 
Even if an analytical solution
for the steady state is known, computing the corresponding set for all possible
parameter values may be difficult.
Due to this difficulty, non-deterministic approaches are frequently used to solve this problem.
A common tool for this kind of analysis are Monte Carlo methods \citep{RobertCas2004},
which are routinely applied in the analysis of uncertain biochemical reaction 
networks \citep{AlvesSav2000,FengHoo2004}.
However, Monte Carlo methods do not give reliable results in the sense that it is possible to
miss important solutions, which is particularly problematic for highly nonlinear
dependencies of the steady state on parameters.
Also, Monte Carlo approaches to the problem at hand typically require that all 
of the possibly multiple steady states for
specific parameter values can be computed explicitly, which is often a difficult task
in itself.

Continuation methods that track the changes in steady state values upon parameter
variations are an efficient computational tool for
this problem \citep{RichterDeC1983,Kuznetsov1995},
but are restricted to low-dimensional parameter variations
and are thus in general unsuitable for exploring higher-dimensional parameter spaces.

Global optimization methods employing branch and bound techniques or interval arithmetics
would in principle be suited to compute steady state regions \citep{MaranasFlo1995,Neumaier1990}.
However, it seems that the corresponding computational cost has obstructed their application to
the analysis of biochemical reaction networks so far.

In this paper, we propose a new approach to obtain reliable bounds on steady state
values under uncertain parameters in a computationally efficient way.
The paper is structured as follows.
In Section~\ref{sec:feasibility}, we study the problem of finding certificates that a given
set in state space does not contain a steady state for any parameters in
a given set in parameter space.
In Section~\ref{sec:bounding}, we use the results obtained in Section~\ref{sec:feasibility}
to develop an algorithm that computes outer bounding regions of steady state values for 
a given set in which parameters vary.
The application of the proposed analysis method is shown for two example reaction networks
in Section~\ref{sec:examples}.

\subsection*{Mathematical notation}

The space of real symmetric $n\times n$ matrices is denoted as $\mathcal S^n$.
The order operator with respect to the positive orthant in $\Real^{m\times n}$ is denoted as
``$\boldsymbol\leq$'',
i.e.\ $0\boldsymbol\leq X\in\Real^{m\times n} \Leftrightarrow 0 \leq X_{i,j}$ for $i=1,\dotsc,m$, $j=1,\dotsc,n$.
The order operator with respect to the cone of positive semidefinite (PSD) matrices
in $\mathcal S^n$ is denoted as ``$\preccurlyeq$'', i.e.\ $0\preccurlyeq X\in\mathcal S^n \Leftrightarrow
X$ is PSD.
The trace of a quadratic matrix $X\in\Real^{n\times n}$ is denoted as $\tr X$.

\section{Problem statement and basic idea}

We consider biochemical reaction networks that are modelled by ordinary differential equations.
This modelling framework is quite general and covers most metabolic networks as well as
many signal transduction pathways, if spatial effects can be neglected.
Mathematically, such models are commonly written as
\begin{equation}
\label{eq:ode}
\begin{aligned}
\dot x = Sv(x,p),
\end{aligned}
\end{equation}
where $x\in\Real^n$ is the concentration vector, $S\in\Real^{n\times m}$ is the stoichiometric
matrix, $p\in\Real^m$ is the vector of parameter values and $v(x,p)\in\Real^m$ is the
vector of reaction fluxes \citep{KlippHer2005}.
Throughout this paper, we assume that fluxes are modelled using the law of mass action,
where $v$ takes the form
\begin{equation}
\label{eq:flux}
\begin{aligned}
v_j(x,p) = p_j \prod_{k=1}^n x_k^{\sigma_{jk}},
\end{aligned}
\end{equation}
for $j=1,\dotsc,m$.
The constants $\sigma_{jk}$ are integers representing the stoichiometric coefficient
of the species $k$ taking part in the $j$-th reacting complex.
In the case of mass action kinetics, the dimensions of the parameter vector and the 
flux vector are in general the same.
Note that our results can be extended to rational functions describing the fluxes,
such as used for Michaelis--Menten kinetics, in a straightforward way.

The problem under consideration can be formulated as follows.
Given a set
$\mathcal P \subset \Real^m$ in parameter space, compute a set $\mathcal X_s \subset \Real^n$
that contains all steady states of the system~\eqref{eq:ode} for parameter
values taken from $\mathcal P$.
Ideally, the set $\mathcal X_s$ should be as small as possible, 
such that for all $x_s \in \mathcal X_s$,
there is a parameter vector $p \in \mathcal P$ with $S v(x_s,p) = 0$.
Then,
\begin{equation}
\begin{aligned}
\mathcal X_s = \left\lbrace x \in \Real^m \mid \exists p \in \mathcal P: S v(x,p) = 0 \right\rbrace.
\end{aligned}
\end{equation}
However, for the case $m>1$, when continuation methods are not suitable, 
there are at present no general methods to compute $\mathcal X_s$ efficiently and reliably.

We present a method to address this problem that works for arbitrarily large state and
parameter spaces, does not need to compute steady state values explicitly and is computationally
efficient.
The method is able to compute reliable, though conservative outer bounds on the set 
$\mathcal X_s$ of all steady states.

In order to search for sets of steady states for a given parameter set $\mathcal P$,
we need means to test whether a candidate solution $\mathcal X_s$ obtained in such a search
is actually valid or not.
Such a test is readily formulated as a feasibility problem.
Moreover, we will see that the Lagrangian dual for this feasibility problem
allows to certify given regions in state space as not containing a steady state
for any parameter value from the set $\mathcal P$.
We then develop an algorithm that uses this information to construct
outer bounds on the region $\mathcal X_s$ of all steady states.

In this paper, we consider only hyperrectangles for the sets $\mathcal X_s$
and $\mathcal P$ in state and parameter space.
An extension to more general convex polytopes is in principle 
easy from the theoretical perspective, but 
it requires a much more elaborate implementation on the practical side.

\section{Feasibility of steady state regions}
\label{sec:feasibility}

\subsection{Feasibility problem and semidefinite relaxation}

The problem of testing whether a given hyperrectangle $\mathcal X_s$ in state space contains steady states
of the system~\eqref{eq:ode}, for some parameter values in a given hyperrectangle $\mathcal P$ in parameter
space, can be formulated as the following feasibility problem:
\begin{equation}
\label{eq:primal-problem}
\textnormal(P):\ \left\lbrace\ 
\begin{aligned}
\textnormal{find} && x\in\Real^n,&\ p\in\Real^m & & \\
\textnormal{s.t.} && S v(x,p) &= 0 &  \\
&& p_{j,min} \leq p_j &\leq p_{j,max} & & j=1,\dotsc,m \\
&& x_{i,min} \leq x_i &\leq x_{i,max} & & i=1,\dotsc,n.
\end{aligned}\right.
\end{equation}

The same problem appears in the context of parameter identification in a recent paper
by \cite{KuepferSau2007}.
They developped a method that uses an infeasibility certificate for the 
problem~\eqref{eq:primal-problem} to exclude regions in parameter space from the
identification procedure, given a set of steady state measurements.
In this section, we take their approach to find an infeasibility certificate
for problem~\eqref{eq:primal-problem},
but give more details about the underlying mathematical techniques.

Relaxing the feasiblity problem~\eqref{eq:primal-problem} to a semidefinite
program \citep{VandenbergheBoy1996} ensures computational efficiency.
The applied relaxation is based on a quadratic representation
of a multivariate polynomial of arbitrary degree \citep{Parrilo2003a}.
In the first step, we construct a vector $\xi$ containing monomials that occur in
the reaction flux vector $v(x,p)$.
In the special case where no single reaction has more than two reagents, a starting
point for the construction of $\xi$ is
\begin{equation*}
\begin{aligned}
\xi\T = (1, p_1, \dotsc, p_m, x_1, \dotsc, x_n, p_1 x_1, \dotsc, p_j x_i, \dotsc, p_m x_n),
\end{aligned}
\end{equation*}
which can usually be reduced by eliminating components that are not required to represent
the reaction fluxes.
We define $k$ such that $\xi\in\Real^k$.
Note that this approach is not limited to second order reaction networks. 
In more general cases, one has to extend the vector $\xi$ by monomials that are products
of several state variables.

Using the vector $\xi$, the elements of the flux vector $v(x,p)$ can be expressed as
\begin{equation}
\label{eq:flux-reformulation}
\begin{aligned}
v_j(x,p) = \xi\T V_j \xi,\quad j=1,\dotsc,m,
\end{aligned}
\end{equation}
where $V_j \in \mathcal S^k$ is a constant symmetric matrix.
The choice of $V_j$ is generally not unique, as an expression of the form
$p_j x_i x_k$ can be decomposed as either $(p_j x_i) (x_k)$ or
$(p_j x_k) (x_i)$. 
This fact may be used to introduce additional equality constraints in the relaxed 
problem~\eqref{eq:relaxed-primal-problem}, but we will neglect this for simplicity of notation.

Using~\eqref{eq:flux-reformulation}, the system~\eqref{eq:ode} can be written as
\begin{equation}
\begin{aligned}
\dot x_i = \xi\T Q_i \xi,\quad i=1,\dotsc,n,
\end{aligned}
\end{equation}
where $Q_i = \sum_{j=1}^m S_{ij} V_j \in \mathcal S^k$ are constant symmetric matrices.

The original feasibility problem~\eqref{eq:primal-problem} is thus equivalent
to the problem
\begin{equation}
\begin{aligned}
\textnormal{find} && \xi \in\Real^k & & \\
\textnormal{s.t.} && \xi\T Q_i \xi &= 0 & i=1,\dotsc,n  \\
&& B \xi &\geq 0 \\
&& \xi_1 &= 1,
\end{aligned}
\end{equation}
where the matrix $B\in\Real^{(2k-2)\times k}$ is constructed to cover the inequality
constraints in~\eqref{eq:primal-problem},
e.g.\ the constraint $p_{1,min} \leq p_1 \leq p_{1,max}$ is represented as
\begin{equation*}
\begin{aligned}
\begin{pmatrix}
-p_{1,min} & 1 & 0 & \dotso & 0 \\
p_{1,max} & -1 & 0 & \dotso & 0
\end{pmatrix} \xi \boldsymbol\geq 0.
\end{aligned}
\end{equation*}
Corresponding constraints for higher order monomials in $\xi$ are obtained easily
as $p_{j,min} x_{i,min} \leq p_j x_i \leq p_{j,max} x_{i,max}$ and have to be
included in the matrix $B$.

A relaxation to a semidefinite program is found by setting $X = \xi \xi\T$.
The resulting non-convex constraint $\operatorname{rank} X = 1$ is omitted in
the relaxation.
Instead, several consequences of how $X$ is defined, namely $X_{11} = 1$
and $X \succcurlyeq 0$, are used as convex constraints.
The relaxed version of the original feasibility problem~\eqref{eq:primal-problem}
is thus obtained as
\begin{equation}
\label{eq:relaxed-primal-problem}
\textnormal(RP):\ \left\lbrace\ 
\begin{aligned}
\textnormal{find} && X\in \mathcal S^k & & \\
\textnormal{s.t.} && \tr(Q_i X) &= 0 & \quad i =1,\dotsc,n \\
&& \tr(e_1 e_1\T X) &= 1 & \\
&& BXe_1 &\boldsymbol\geq 0 & \\
&& BXB\T &\boldsymbol\geq 0 & \\
&& X &\succcurlyeq 0, & 
\end{aligned}\right.
\end{equation}
where $e_1 = (1,0,\dotsc,0)\T \in \Real^k$.

The basic relationship between the original problem~\eqref{eq:primal-problem} and
the relaxed problem~\eqref{eq:relaxed-primal-problem} is that if the original problem
is feasible, then the relaxed problem is also feasible.
Thus, the relaxation allows to certify a region in state space as infeasible
for steady states, as we will see when going to the Lagrange dual problem.

\subsection{Infeasibility certificates from the dual problem}

The Lagrange dual problem can be used to certify infeasibility of the
primal problem \eqref{eq:relaxed-primal-problem}.
First, the Lagrangian function $L$ is constructed for the primal problem.
We obtain
\begin{equation*}
\begin{aligned}
& L(X,\lambda_1,\lambda_2,\lambda_3,\nu) = -\lambda_1\T B X e_1 - \tr(\lambda_2\T B X B\T) \\
& \quad - \tr(\lambda_3\T X) + \sum_{i=1}^n \nu_i \tr(Q_i X) + \nu_{n+1}(\tr(e_1 e_1\T X)-1),
\end{aligned}
\end{equation*}
where $\lambda_1\in\Real^{2k-2}$, $\lambda_2\in\mathcal S^{2k-2}$, $\lambda_3\in\mathcal S^k$
and $\nu\in\Real^{n+1}$.
Using the cyclic property of the trace operator, i.e.\ $\tr(ABC) = \tr(BCA) = \tr(CAB)$, we
rewrite
\begin{equation*}
\begin{aligned}
\tr(\lambda_2\T B X B\T) = \tr(B\T \lambda_2\T B X)
\end{aligned}
\end{equation*}
and
\begin{equation*}
\begin{aligned}
\lambda_1\T B X e_1 &= \tr(e_1 \frac{\lambda_1\T}{2} B X) + \tr(e_1\T \frac{\lambda_1}{2} B\T X) \\
&= \tr((e_1 \frac{\lambda_1\T}{2} B+e_1\T \frac{\lambda_1}{2} B\T) X).
\end{aligned}
\end{equation*}
The second reformulation has also the advantage of providing a symmetric multiplier for $X$,
which is more efficient from the computational side.

Based on the Lagrangian $L$, the dual problem is obtained as
\begin{equation*}
\begin{aligned}
&\max\ \inf_{X \in \mathcal S^k} L(X,\lambda_1,\lambda_2,\lambda_3,\nu) \\
&\begin{aligned}
\textnormal{s.t.}&&\lambda_1 \boldsymbol\geq 0,\ 
\lambda_2 \boldsymbol\geq 0,\ 
\lambda_3 &\succcurlyeq 0,
\end{aligned}
\end{aligned}
\end{equation*}

which is equivalent to
\begin{equation}
\label{eq:dual-problem}
\textnormal(D):\left\lbrace
\begin{aligned}
&\max\quad \nu_{n+1} \\
&\begin{aligned}
\textnormal{s.t.}&&\ B\T \lambda_2 B + e_1 \lambda_1\T B + B\T \lambda_1 e_1\T &  \\
&& + \lambda_3 + \sum_{i=1}^n \nu_i Q_i + \nu_{n+1} e_1 e_1\T &= 0  \\
&&\lambda_1 \boldsymbol\geq 0,\ 
\lambda_2 \boldsymbol\geq 0,\ 
\lambda_3 &\succcurlyeq 0.
\end{aligned}
\end{aligned}\right.
\end{equation}

It is a standard procedure in convex optimisation to use the dual problem
in order to find a certificate that guarantees infeasibility of the
primal problem \citep{BoydVan2004}. 
For the problem at hand, this principle is formulated in the following theorem.
\begin{theorem}
\label{theo:infeasible}
If the dual problem \eqref{eq:dual-problem} has a feasible solution where
$\nu_{n+1}>0$,
then the primal problem \eqref{eq:primal-problem} is infeasible.
\end{theorem}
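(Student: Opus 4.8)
The plan is to argue by contradiction using weak duality, exploiting the relationship already noted above that feasibility of the original problem~\eqref{eq:primal-problem} implies feasibility of the relaxation~\eqref{eq:relaxed-primal-problem}. So I would suppose that $(\lambda_1,\lambda_2,\lambda_3,\nu)$ is feasible for the dual~\eqref{eq:dual-problem} with $\nu_{n+1}>0$, and assume for contradiction that~\eqref{eq:primal-problem} is feasible. Then~\eqref{eq:relaxed-primal-problem} admits a feasible point $X\in\mathcal S^k$, and the goal becomes to test this $X$ against the dual certificate and derive $\nu_{n+1}\leq 0$.

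The key step is to rearrange the dual equality constraint so that the two ``objective'' matrices are isolated on one side,
$$\sum_{i=1}^n \nu_i Q_i + \nu_{n+1} e_1 e_1\T = -\bigl(B\T\lambda_2 B + e_1\lambda_1\T B + B\T\lambda_1 e_1\T + \lambda_3\bigr),$$
and then to take the trace inner product of both sides with the feasible $X$. On the left-hand side the constraints $\tr(Q_iX)=0$ and $\tr(e_1e_1\T X)=1$ of~\eqref{eq:relaxed-primal-problem} collapse everything to exactly $\nu_{n+1}$. On the right-hand side I would show term by term that the result is nonpositive: using the cyclic property of the trace together with the symmetry of $X$, the two $\lambda_1$-terms each reduce to the scalar $\lambda_1\T BXe_1$, the $\lambda_2$-term reduces to $\tr(\lambda_2 BXB\T)$, and the remaining term is $\tr(\lambda_3 X)$. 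Combining the two computations yields $\nu_{n+1}\leq 0$, contradicting $\nu_{n+1}>0$, so~\eqref{eq:primal-problem} must be infeasible.

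The main obstacle is verifying the sign of each right-hand term, which hinges on matching each multiplier to the correct cone. For the $\lambda_1$ and $\lambda_2$ contributions I would use that $BXe_1\boldsymbol\geq 0$ and $BXB\T\boldsymbol\geq 0$ hold for the feasible $X$, together with $\lambda_1\boldsymbol\geq 0$ and $\lambda_2\boldsymbol\geq 0$, so that the relevant inner products are sums of products of nonnegative reals and are therefore nonnegative. For the $\lambda_3$ contribution I would instead invoke the semidefinite fact that $\tr(\lambda_3 X)\geq 0$ whenever $\lambda_3\succcurlyeq 0$ and $X\succcurlyeq 0$. Thus each bracketed term is nonnegative and the negated right-hand side is nonpositive, which closes the argument. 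The only bookkeeping care needed is the elementary identity $\tr(e_1\lambda_1\T BX)=\tr(B\T\lambda_1 e_1\T X)=\lambda_1\T BXe_1$, which follows from cyclicity and $X=X\T$ and is precisely what lets the two off-diagonal $\lambda_1$-terms be handled together.
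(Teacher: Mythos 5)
Your proof is correct, but it takes a different route from the paper's. The paper argues at the level of optimal values: it observes that the dual constraints are homogeneous in the multipliers, so the dual optimum $d^\ast$ is either $0$ or $\infty$, and a feasible point with $\nu_{n+1}>0$ forces $d^\ast=\infty$; it then cites weak duality of semidefinite programs to conclude that the primal value (of the associated zero-objective minimization) is also $\infty$, i.e.\ the relaxation and hence the original problem are infeasible. You instead give a self-contained certificate argument: you pair the dual equality constraint directly with a hypothetical feasible $X$ of~\eqref{eq:relaxed-primal-problem} via the trace inner product, collapse the left side to $\nu_{n+1}$ using $\tr(Q_iX)=0$ and $\tr(e_1e_1\T X)=1$, and show the right side is nonpositive by matching each multiplier to its cone ($\lambda_1\boldsymbol\geq 0$ against $BXe_1\boldsymbol\geq 0$, $\lambda_2\boldsymbol\geq 0$ against $BXB\T\boldsymbol\geq 0$, $\lambda_3\succcurlyeq 0$ against $X\succcurlyeq 0$), yielding $\nu_{n+1}\leq 0$ and a contradiction. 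In effect you re-prove weak duality inline for this specific problem rather than invoking it, which makes the argument more elementary and has the side benefit of verifying that the signs in the stated dual~\eqref{eq:dual-problem} are actually consistent; the paper's version is shorter and makes the scaling structure of the certificate explicit, which is what the bisection algorithm implicitly relies on when it tests whether $d^\ast=\infty$. Both arguments need the same preliminary fact, which you correctly flag: feasibility of~\eqref{eq:primal-problem} implies feasibility of~\eqref{eq:relaxed-primal-problem} via $X=\xi\xi\T$.
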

\begin{proof}
Note that the constraints of the dual problem~\eqref{eq:dual-problem} are homogenous in
the free variables: if $(\lambda_1^\prime,\lambda_2^\prime,\lambda_3^\prime,\nu^\prime)$
is feasible, then also 
$(\alpha\lambda_1^\prime,\alpha\lambda_2^\prime,\alpha\lambda_3^\prime,\alpha\nu^\prime)$
with any $\alpha \geq 0$ is feasible.
In particular, choosing all free variables to be zero is always a feasible solution of
the dual problem~\eqref{eq:dual-problem}.

Let $d^\ast$ be the optimal value of the dual problem~\eqref{eq:dual-problem}.
By the previous argument, it is clear that either $d^\ast = 0$ or $d^\ast = \infty$.
Under the assumption made in the theorem, we have $d^\ast = \infty$.

To the primal feasibility problem~\eqref{eq:relaxed-primal-problem},
we can associate a minimization
problem with zero objective function and the same constraints as in~\eqref{eq:relaxed-primal-problem}.
Let $p^\ast$ be the optimal value of this minimization problem.
We have $p^\ast = 0$, if the primal problem~\eqref{eq:relaxed-primal-problem} is
feasible, and $p^\ast = \infty$ otherwise.
Weak duality of semidefinite programs \citep{VandenbergheBoy1996} assures
that $d^\ast \leq p^\ast$.
In particular, $d^\ast = \infty$ implies $p^\ast = \infty$, and the primal
problem~\eqref{eq:relaxed-primal-problem} as well as the original feasibility
problem~\eqref{eq:primal-problem} are both infeasible.\hfill $\Box$
\end{proof}

Theorem~\ref{theo:infeasible} sets the basis for our further considerations.

\section{Bounding feasible steady states}
\label{sec:bounding}

In this section, we present an approach to find bounds on the steady
state region $\mathcal X_s$, based on the results obtained in the previous
section.
As basic additional requirement, we assume that some upper and lower bounds on
steady states are already known previously by other means.
Let these bounds be given by
\begin{equation}
\label{eq:pre-bounds}
\begin{aligned}
x_{i,lower} \leq x_i \leq x_{i,upper},\quad i=1,\dotsc,n.
\end{aligned}
\end{equation}
In biochemical reaction networks, such bounds can often be obtained from
mass conservation relations, as done for the examples in Section~\ref{sec:examples}.
Also, it is often possible to show positive invariance of a sufficiently large
compact set in state space for the system~\eqref{eq:ode}.
These bounds may be very loose though, and the main objective of our method
is to tighten them as far as possible.

To this end, we use a bisection algorithm that finds the maximum ranges
$[x_{j,lower},x_{j,min}]$ and $[x_{j,max},x_{j,upper}]$ for which infeasibility
can be proven via Theorem~\ref{theo:infeasible}. The algorithm iterates
over $j=1,\dotsc,n$, while the steady state values $x_i$ for $i \neq j$ are 
assumed to be located within the interval given by inequality~\eqref{eq:pre-bounds}.

We give the bisection algorithm in pseudocode for computing the lower bound
$x_{1,min}$. The computation of the upper bound $x_{1,max}$ works in essentially 
the same way, with some obvious modifications.

\begin{alg}[Lower bound maximization by bisection]
\label{alg:bisection}\hfill \\[-0.5cm]
\begin{ttfamily}
\begin{tabbing}
\hspace*{0.3cm} \= \makebox[0cm][l]{up\_guess <- $x_{1,upper}$}\hspace*{0.2cm} \= \hspace*{0.3cm} \= \\
\> lo\_guess <- $x_{1,lower}$ \\
\> next\_$x_1$ <- $x_{1,upper}$ \\
\> while (up\_guess - lo\_guess)$\,\geq\,$tolerance \\
\>\> use constraint $x_{1,lower} \leq x_1 \leq \mbox{}$next\_$x_1$ \\
\>\> solve semidefinite program $(D)$ \\
\>\> if $d^\ast = \infty$ \\
\>\>\> lo\_guess <- next\_$x_1$ \\
\>\>\> increase next\_$x_1$ by $\frac{1}{2}$(up\_guess - next\_$x_1$) \\
\>\> else \\
\>\>\> up\_guess <- next\_$x_1$ \\
\>\>\> decrease next\_$x_1$ by $\frac{1}{2}$(next\_$x_1$ - lo\_guess) \\
\>\> endif \\
\> endwhile \\
\> $x_{1,min}$ <- lo\_guess
\end{tabbing}
\end{ttfamily}
\end{alg}

Due to the availability of efficient solvers for semidefinite programs and
the use of bisection to maximize the interval that is certified as infeasible,
Algorithm~\ref{alg:bisection} can run considerably fast on standard desktop
computers, as we will see in the examples discussed in the following section.

In our analysis method, Algorithm~\ref{alg:bisection} is run for all state variables,
and as both maximization of the lower bound and minimization of the upper bound
of the steady state values.
Its output is a hyperrectangle in state space containing
all steady states for the assumed parameter ranges.
This is a relevant information for the global sensitivity analysis of a biochemical
reaction network, as it allows to discriminate concentration
values that are highly affected by the assumed parameter variations from
others that are less affected.
Moreover, by repeating the computation for different parameter ranges,
it is also possible to assess the influence of individual parameters
on steady state concentrations, which is closer related to classical,
local sensitivity analysis.

\section{Examples}
\label{sec:examples}

\subsection{A simple conversion reaction}

As first example, we consider a simple conversion reaction where the region of
steady states for a given parameter box can be computed analytically.
Consider the reaction network
\begin{equation*}
A\ \underset{k_2}{\overset{k_1}{\rightleftarrows}}\ B.
\end{equation*}
Denote the concentrations of $A$ and $B$ as $a$ and $b$, respectively.
There is a conservation relation $a(t)+b(t) = a_0$, so the system can
be modelled by one differential equation
\begin{equation}
\label{eq:conversion-reaction}
\begin{aligned}
\dot a = k_2 (a_0 - a) - k_1 a.
\end{aligned}
\end{equation}
Furthermore, there is a unique steady state $a_s$ for all parameter values, given
by
\begin{equation*}
\begin{aligned}
a_s = \frac{k_2 a_0}{k_1 + k_2}.
\end{aligned}
\end{equation*}
From the conservation relation, we have the loose bound $0 \leq a_s \leq a_0$
which is valid for all parameter values.
Assume now that $a_0 = 1$ is fixed, and let the other parameters vary in a box
$k_1,k_2\in[k_{min},k_{max}]$. Then, the steady state varies in the interval
\begin{equation*}
\begin{aligned}
a_s \in \left[\frac{k_{min}}{k_{max} + k_{min}},\frac{k_{max}}{k_{min} + k_{max}}\right].
\end{aligned}
\end{equation*}
In the specific case where $k_{min}=1$ and $k_{max}=2$, the steady state interval
is $a_s \in [\frac{1}{3},\frac{2}{3}]$.
Our algorithm is able to compute numerically exact bounds in these cases.
For a numerical precision of $10^{-6}$, computation time is a few seconds
on a standard desktop computer.

\subsection{An enzymatic cycle}

As a more complex example, where the steady state region for a given
parameter box cannot be computed analytically, we consider an enzymatic cycle.
These cycles appear very frequently in cellular reaction networks,
in particular in the form of phosphorylation/de\-phospho\-ry\-lation cycles \citep{ShacterCho1984}.
An enzymatic cycle as encountered in covalent modification of proteins \citep{GoldbeterKos1981}
is typically described by the reaction network
\begin{equation}
\label{eq:enzymatic-cycle}
\begin{aligned}
E + A\ &\underset{k_2}{\overset{k_1}{\rightleftarrows}}\ C_1\\
C_1\ &{\overset{k_3}{\rightarrow}}\ E + A^\ast \\
P + A^\ast\ &\underset{k_5}{\overset{k_4}{\rightleftarrows}}\ C_2\\
C_2\ &{\overset{k_6}{\rightarrow}}\ P + A.
\end{aligned}
\end{equation}

There are three conservation relations
\begin{equation*}
\begin{aligned}
{}[A]+[A^\ast]+[C_1]+[C_2] &= A_0 \\
[E] + [C_1] &= E_0 \\
[P] + [C_2] &= P_0.
\end{aligned}
\end{equation*}
Denoting $a = [A^\ast]$, $c_1 = [C_1]$ and $c_2 = [C_2]$ and using the law
of mass action, the reaction flux vector is given by
\begin{equation*}
\begin{aligned}
v = \begin{pmatrix}
k_1 (A_0 - a -c_1 - c_2)(E_0-c_1) \\
k_2 c_1 \\
k_3 c_1 \\
k_4 (P_0 - c_2) a \\
k_5 c_2 \\
k_6 c_2
\end{pmatrix}.
\end{aligned}
\end{equation*}
Due to the conservation relations, we only need to use three differential equations in the
model, which is given by
\begin{equation}
\begin{aligned}
\frac{d}{dt} (a,\ c_1,\ c_2)\T =
\begin{pmatrix}
0 & 0 & 1 & -1 & 1 & 0 \\
1 & -1 & -1 & 0 & 0 & 0 \\
0 & 0 & 0 & 1 & -1 & -1
\end{pmatrix} v.
\end{aligned}
\end{equation}
For the sensitivity analysis,
the parameters $k_1$ and $k_4$ as well as the total concentrations $A_0$, $E_0$ and $P_0$
are assumed to be fixed at $k_1 = 10^{5}$, $k_4 = 5\cdot 10^{4}$, $A_0 = 1$ and $E_0 = P_0 = 0.01$.
The other parameters are assumed to be variable parameters, with variations around their
nominal values
$k_{2,nom} = k_{5,nom} = 1$ and $k_{3,nom} = k_{6,nom} = 10^3$.

From the conservation relations and invariance of the positive orthant
we have the steady state bounds
\begin{equation*}
\begin{aligned}
0 \leq a \leq A_0,\quad
0 \leq c_1 \leq E_0,\quad
0 \leq c_2 \leq P_0,
\end{aligned}
\end{equation*}
which are valid for any parameter values.

We have applied the proposed analysis method to find tighter bounds
on possible steady state values, comparing three different regions in which parameters
of the enzymatic cycle are allowed to vary.
The three different regions are given by $\mathcal P_1$, $\mathcal P_2$ and $\mathcal P_3$,
where $\mathcal P_1,\mathcal P_2, \mathcal P_3 \subset \Real^4$ and
\begin{itemize}
\item $(k_2,k_3,k_5,k_6)\in\mathcal P_1 \Leftrightarrow 0.98\, k_{i,nom} \leq k_i \leq 1.02\, k_{i,nom}$,
corresponding to parameter variations of up to 2\%,
\item $(k_2,k_3,k_5,k_6)\in\mathcal P_2 \Leftrightarrow 0.9\, k_{i,nom} \leq k_i \leq 1.1\, k_{i,nom}$,
corresponding to parameter variations of up to 10\%, and
\item $(k_2,k_3,k_5,k_6)\in\mathcal P_3 \Leftrightarrow 0.5\, k_{i,nom} \leq k_i \leq 2\, k_{i,nom}$,
corresponding to up to 2--fold parameter variations,
\end{itemize}
with $i=2,3,5,6$ in all three cases.

The dual problem $(D)$ has been constructed by using
\begin{equation}
\begin{aligned}
\xi\T = (1,\ k_2,\ k_3,\ k_5,\ k_6,\ a,\ c_1,\ c_2),
\end{aligned}
\end{equation}
and deriving appropriate matrices $Q_i$, $B$, for the steady state equations and the
constraints, respectively.
Algorithm~\ref{alg:bisection} was then used to compute bounds on the
steady state concentrations.
We compare these results to an estimate for the region of steady state concentrations
obtained by Monte--Carlo tests.
The results are shown in Figure~\ref{fig:enzymatic-cycle-results}.
The average computation time to obtain the feasible intervals for all three state
variables and one parameter region was about 25 seconds.
The Monte--Carlo tests done to produce the figures took consistently about 20 \% more computation time,
where 1000 parameter points were used for each test.
However, for a reliable evaluation by Monte--Carlo methods, much more points
should be used, which would increase computation time significantly.

As can be seen from the figure, our approach is able to find tight intervals for the steady state
values of the individual concentrations.
However, the results also highlight the limitations of using hyperrectangles if the
steady state values are highly correlated.

\begin{figure}
\begin{center}
\subfigure[Parameter uncertainty region $\mathcal P_1$ (2\% variation)]{%
\includegraphics[width=0.48\linewidth]{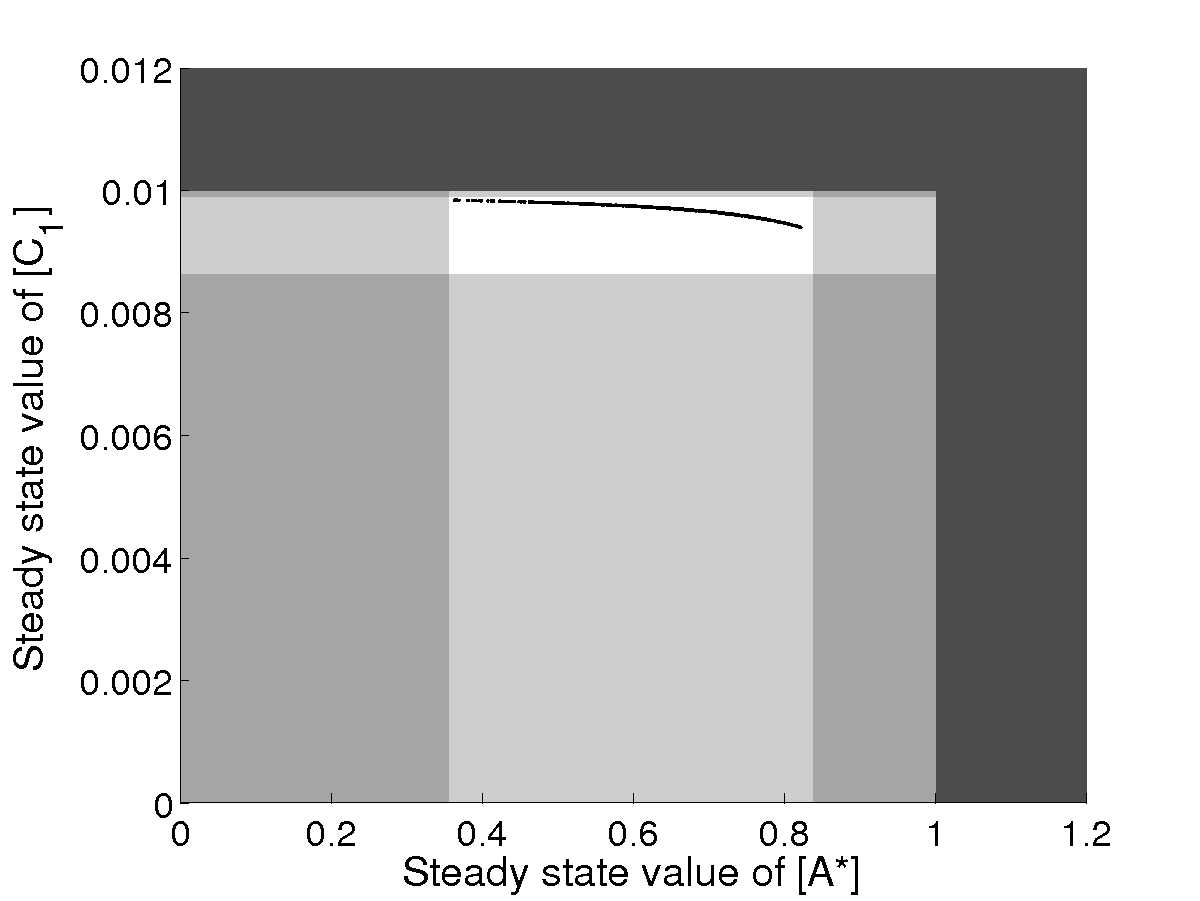}}
\subfigure[Parameter uncertainty region $\mathcal P_2$ (10\% variation)]{%
\includegraphics[width=0.48\linewidth]{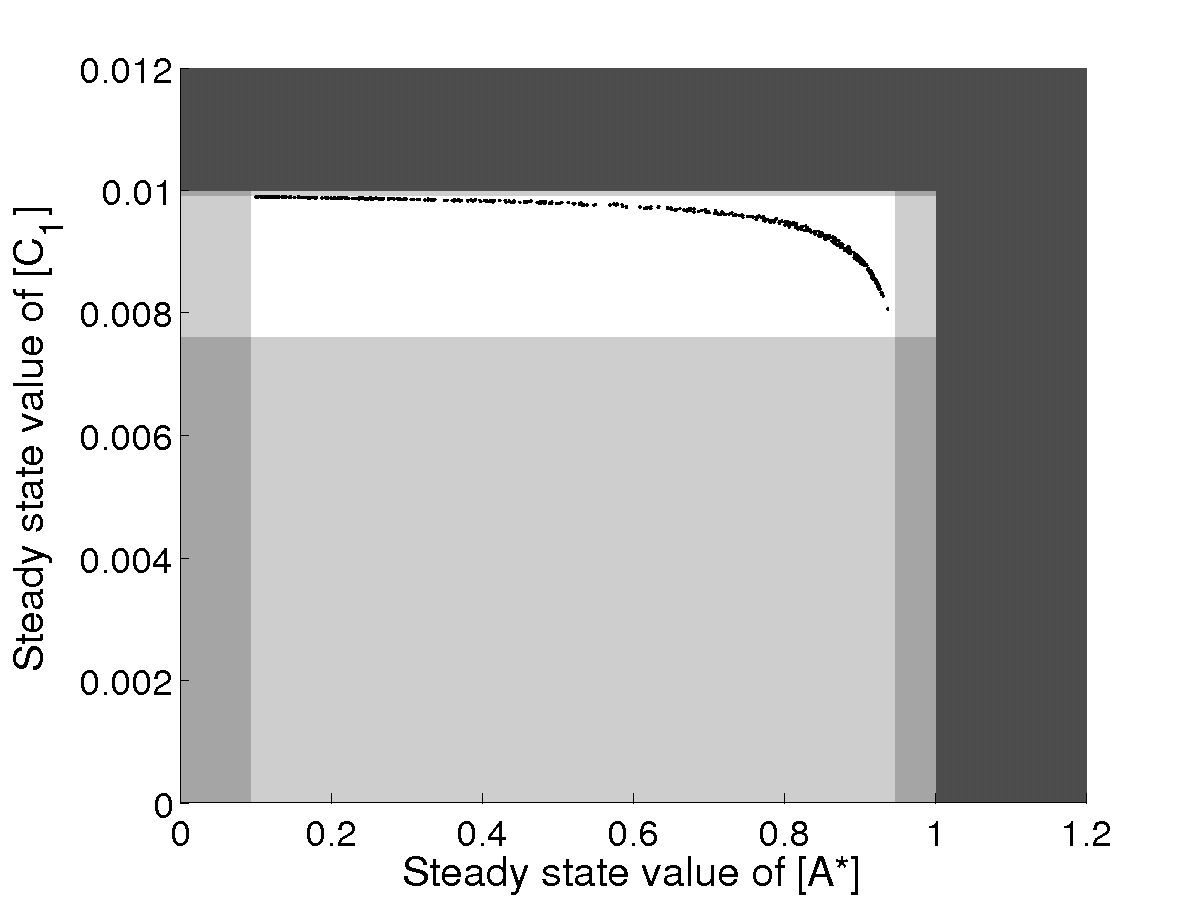}}\\
\subfigure[Parameter uncertainty region $\mathcal P_3$ (2--fold variation)]{%
\includegraphics[width=0.48\linewidth]{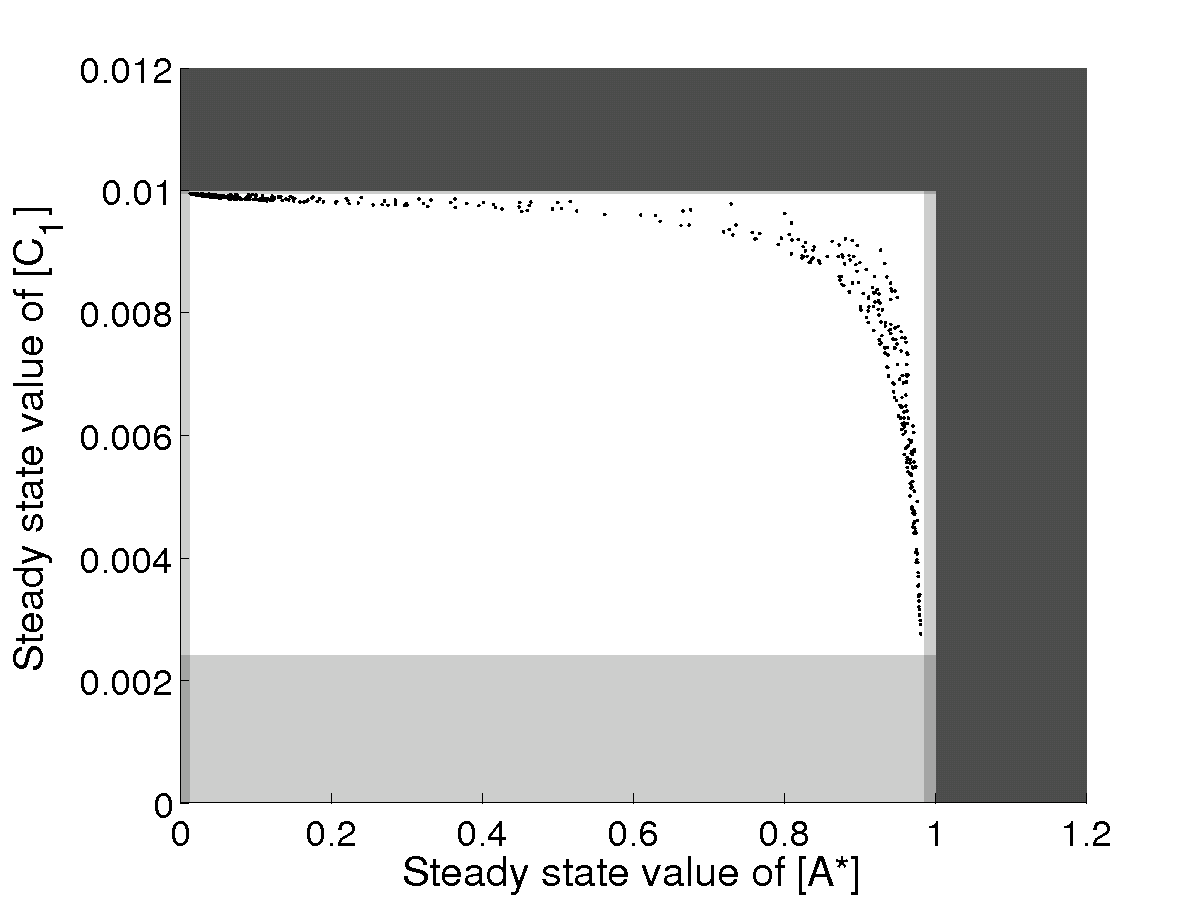}}
\end{center}
\caption{Feasible steady states for the enzymatic cycle with three different parameter
regions, comparison of reliable bounds obtained with Algorithm~\ref{alg:bisection} and
Monte--Carlo estimates. Light gray regions have been certified infeasible by
Algorithm~\ref{alg:bisection}. Black dots are steady state values obtained from Monte-Carlo tests.
Dark gray regions are known to be infeasible from conservation relations.}
\label{fig:enzymatic-cycle-results}
\end{figure}

Our analysis also yields a biochemical interpretation, related to the property of
ultrasensitivity.
The concept of ultrasensitivity is quite important for biochemical reaction networks,
in particular for those that constitute cellular signal transduction pathways
\citep{LevineKue2007}.
Shortly, ultrasensitivity means that a small variation in a control variable
has a relatively large effect on an output variable, whereas for increasing
variations in the control variable, the range of the output variable will be
considerably less increasing (see also Figure~\ref{fig:ultrasensitivity}).
Thus, ultrasensitivity is an inherently non-linear and non-local property.
For the enzymatic cycle,
a variation of only 2\% in parameters already allows the steady state value of
$[A^\ast]$ to vary over almost half of the interval given from the conservation relation,
and with an allowable parameter variation of 10\% the steady state value
of $[A^\ast]$ can span nearly the whole interval.
This is a clear indication of the ultrasensitivity which is typical for the enzymatic cycle
\citep{GoldbeterKos1981}.

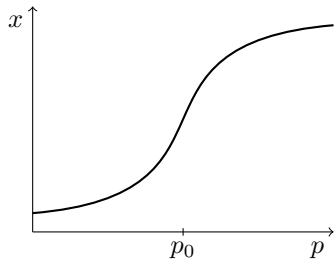
\begin{figure}[bt]
\begin{center}
\begin{tikzpicture}[scale=0.5]
\draw[->] (0,0) -- (8,0) node[below left] {$p$};
\draw[->] (0,0) -- (0,6) node[below left] {$x$};
\draw[thick] (0,0.5) .. controls +(6,0.5) and +(-6,-0.5) .. (8,5.5);
\draw (4,0.1) -- (4,-0.1) node[pos=0.5,below] {$p_0$};
\end{tikzpicture}
\end{center}
\caption{Illustration of ultrasensitivity. Response of an output variable $x$ to a control
variable $p$. The response is ultrasensitive, with high sensitivity around the nominal
value $p_0$ and considerable less sensitivity for other values.}
\label{fig:ultrasensitivity}
\end{figure}

In addition, our results show that the steady state value of $[C_1]$, the concentration
of the intermediate enzyme--substrate complex, is not ultrasensitive, because
its value spans a large interval only for large parameter variations.
Similar results hold for $[C_2]$.

\section{Conclusions}

We have studied the problem of computing the region of all steady states of biochemical
reaction networks, provided that parameters are allowed to vary within a known region.
This is an important problem in sensitivity analysis of reaction networks.
Our approach is based on formulating a feasibility problem to check whether a candidate
region in state space actually contains steady states.
This feasibility problem is relaxed to a semidefinite program, and its Lagrangian
dual provides certificates of infeasibility of a candidate region in state space.
These certificates can be used to efficiently minimize the estimate of the known feasible region
in state space by a bisection algorithm.

We have applied our sensitivity analysis to two simple example networks.
For the first example, our algorithm is able to compute numerically exact bounds, which
could be verified from the analytical solution.
In the second example, we compared the bounds obtained from our algorithm to steady state
values obtained through Monte--Carlo tests. 
In this example, our approach was more efficient computationally than Monte--Carlo tests.
Also, it gives guaranteed bounds on the steady state values, which
cannot be achieved by randomized methods such as Monte--Carlo tests.
Based on the premise that we are working
with hyperrectangles only, the obtained bounds are fairly tight.
The second example also shows that our approach is able to confirm ultrasensitivity
of the Goldbeter--Koshland switch.

In summary, our approach is a reliable and computationally efficient method to estimate the range
of possible steady state variations due to multiple simultaneous parameter variations in biochemical
reaction networks, and thus provides a valuable tool for global sensitivity analysis.

\bibliography{/home/waldherr/Forschung/Referenzen.bib}

\end{document}